\pgfplotsset{width=7cm,compat=1.3}
\DeclareMathOperator{\spn}{span}
\newcommand{\nop}[1]{} %JYM: This new commend allows to comment text within "{}" in "\nop{}".
\newcommand{\shorten}[1]{}
\newtheorem{theorem}{Theorem}
\newtheorem{definition}{Definition}
\newtheorem{lemma}{Lemma}
\newtheorem{corollary}{Corollary}
\newcommand{\signed}%
    {{\unskip\nobreak\hfill\penalty50
      \hskip2em\hbox{}\nobreak\hfil $\blacksquare$
      \parfillskip=0pt \finalhyphendemerits=0 \par}}
\begin{document}
\title{Practical Functional Regenerating Codes for Broadcast Repair of Multiple Nodes}

%%% Several authors with up to three affiliations:
\author{
	\IEEEauthorblockN{Nitish Mital\IEEEauthorrefmark{1}, Katina Kralevska\IEEEauthorrefmark{2}, Cong Ling\IEEEauthorrefmark{1}, and Deniz G\"{u}nd\"{u}z\IEEEauthorrefmark{1} \\ 
	\IEEEauthorblockA{\IEEEauthorrefmark{1}Department of Electrical Electronics Engineering, Imperial College London}
	\IEEEauthorblockA{\IEEEauthorrefmark{2}Dep. of Information Security and Communication Technology, NTNU, Norwegian University of Science and Technology}
	Email: \{n.mital,d.gunduz,c.ling\}@imperial.ac.uk, katinak@ntnu.no}
}

\maketitle

\begin{abstract}
 A code construction and repair scheme for optimal functional regeneration of multiple node failures is presented, which is based on stitching together short MDS codes on carefully chosen sets of points lying on a linearized polynomial. The nodes are connected wirelessly, hence all transmissions by helper nodes during a repair round are available to all the nodes being repaired. The scheme is simple and practical because of low subpacketization, low I/O cost and low computational cost. Achievability of the minimum-bandwidth regenerating (MBR) point, as well as an interior point, on the optimal storage-repair bandwidth tradeoff curve is shown. The subspace properties derived in the paper provide insight into the general properties of functional regenerating codes.
\end{abstract}

\section{Introduction}
\begin{comment}Maximum distance separable (MDS) codes are typically used for distributed caching of contents at multiple access points \cite{6495773, 8114221, DBLP:journals/corr/abs-1712-00649}. MDS codes provide flexibility for storage so that users with different connectivity or mobility patterns can download a file from only a subset of the access points. In particular, an $(n, k)$ MDS code encodes a file of size $M$ bits by splitting it into $k$ equal-size fragments and encoding them into $n$ fragments which are stored at $n$ cache nodes. The original file can be recovered by accessing any $k$ out of $n$ fragments from $k$ distinct access points.\end{comment}
The content of a file is typically distributed among multiple access points such that accessing any $k$ distinct access points is sufficient to recover the original file. MDS codes provide high storage efficiency while satisfying the above property. When some nodes fail, their cache contents need to be regenerated to be able to continue serving users. An important objective of edge caching in wireless networks is to reduce the backhaul link loads; therefore, we will consider \textit{cache recovery at the edge}; that is, rather than updating the failed cache contents from a central server through backhaul links, the failed cache contents are regenerated with the help of surviving cache nodes. The total amount of data transferred from the surviving nodes to repair the failed nodes is called the \textit{repair bandwidth}. Traditional MDS codes have high storage efficiency, but their repair bandwidth is large \cite{5550492}. The data of one node is repaired by accessing and transferring data from $k$ nodes, i.e., by recovering the whole content library.

Dimakis et al. showed in  \cite{5550492} that there is a fundamental trade-off between the storage and repair bandwidth by mapping the repair problem in a distributed storage system to a multicasting problem over an information flow graph. The analysis focuses on a single node repair; that is, losing one of the nodes triggers the repair process. Regenerating codes achieve any point on the optimal trade-off curve, while minimum-storage regenerating (MSR) codes and minimum-bandwidth regenerating (MBR) codes operate on the two extremes of this trade-off curve.

%Distributed storage systems are an essential component of wireless caching and delivery solutions that have become an emerging research area \cite{6495773,7438743}. %due to the increasing use of wireless devices, femtocaching \cite{6495773}, wireless D2D caching \cite{7438743}. In such decentralized settings, failure of storage nodes is common and can place lots of network traffic just for repairing or replacing failed nodes. 

It was observed in \cite{5402494} that multiple node repair; that is, the repair process starts only after $r$ nodes fail, is more efficient in terms of the repair bandwidth per node, compared to repairing each node as it fails. In \cite{5978920} and \cite{6565355}, the authors introduce cooperative regenerating codes, which repair multiple failures cooperatively by allowing each of the $r$ nodes being repaired to collect data from $d$ non-failed nodes, called \textit{helper nodes}, and then to cooperate with the other $r-1$ nodes being repaired, called \textit{newcomers}. Cooperative repair allows each newcomer to contact any set of helper nodes independently. An explicit construction of regenerating codes that achieve minimum repair bandwidth under cooperative repair is given in \cite{6566803}.

Another model studied in the literature is the centralized repair model \cite{8469091,7852289}, in which all the helper nodes transmit the repairing symbols to a centralized node, which then repairs the failed nodes. There being a centralized entity repairing the failed nodes, there is no need for the newcomers to exchange data between themselves like in cooperative repair, thus making the system simpler. 

Instead, similarly to \cite{7000553}, we will consider broadcast repair; that is, transmissions from each helper node are received in an error-free manner by all the newcomers. In summary, we will study the broadcast repair of multiple failed cache nodes. The storage-repair bandwidth trade-off for the repair of multiple fully failed nodes is investigated in \cite{7459908},\cite{8613401}.

The broadcast repair model is theoretically equivalent to the centralized multi-node repair model studied in \cite{8469091,7852289}. Hence, the results that we derive, and codes that we construct for broadcast repair are directly applicable to the centralized repair model. The broadcast repair is different from centralized repair in that, while centralized repair involves a centralized entity which then repairs the failed nodes, thus involving two rounds, broadcast repair involves only one round, and thus, is simpler and faster.% The main difference between centralized and broadcast repair is that broadcast repair is simpler and faster as it does not involve a centralized entity repairing the data. 

In \cite{7852289}, it is shown that the functional MBR point for repair of multiple nodes is not achievable under exact repair. %The reason is that the intersection requirements between the subspaces stored by the nodes for exact repair, over constrain the system such that the linear independence requirements for file reconstruction have to be traded off, and vice versa, ensuring sufficient linear independence properties between the subspaces for file reconstruction does not allow for the subspace intersections required for exact repair. 
Similarly, it is shown that under exact repair, the functional repair tradeoff interior points are also not achievable.  
In \cite{8469091,DBLP:journals/corr/ZorguiW17}, it is shown that cooperative repair achieves the minimum bandwidth of centralized repair (and broadcast repair), under exact repair, albeit at a slightly higher storage cost. %Due to the equivalence of the centralized repair model and broadcast repair model, cooperative repair can achieve the minimum repair bandwidth for broadcast repair too.

In this paper, our contribution is to give an explicit code construction to achieve the optimal storage-repair bandwidth tradeoff under functional repair, first for the MBR point for all admissible parameters, and then for an interior point, thus potentially providing us with a general framework to construct functional repair codes to achieve any point on the tradeoff curve. The broadcast nature of the system allows all the newcomers to receive the same data, which simplifies the coding scheme.
%Locality, the number of nodes contacted during a repair of a single node, is another measure for efficient node repairs...\\
%Functional repair codes have been studied in recent literature. 
Reference \cite{DBLP:journals/corr/abs-1809-08138} studies the functional repair problem from a projective geometry viewpoint. The subspace/projective geometry view makes it hard to visualize how a set of nodes look like, and how one might approach the construction of a code. There have been attempts to derive the conditions necessary for a functional repair code in \cite{DBLP:journals/corr/abs-1809-08138,6620243}. Our code construction strives to address this problem by proposing a simple scheme and simple conditions to guarantee optimality. 

Our code construction has the property that for most failure patterns, the helper nodes do not have to perform computations in a repair round; instead they just read and send the data to the newcomers. This property is called repair-by-transfer, which is desirable for a low I/O cost. Functional repair-by-transfer MDS codes were constructed in \cite{6283043} for some parameters.

%Some references that give code constructions for achieving interior points for single node repair are \cite{7054531,7054531}.
%In this paper, we achieve an interior point for multiple node repair.

Other than these, in our knowledge, there has not been much progress in providing simple, explicit code constructions for functional repair. The network coding literature usually employs random coding, which is not the most practically feasible scheme for distributed storage because of large overheads. 

\section{System Model} \label{nm}
Consider a wireless caching system where $n$ nodes, each with storage capacity $\alpha$ bits, store a file of size $M$ bits. %The file is divided into $k$ fragments of size $\sfrac{M}{k}$ bits which are further equally divided into $\alpha$ bits. 
We index these storage nodes by the set $\cal N$ $\triangleq \{ 1, \ldots, n \}$. The nodes are fully connected by a wireless broadcast medium and use orthogonal channels for data transmission.

 We refer to the nodes that fail as the \textit{failed nodes} and the nodes that do not experience any losses as the \textit{surviving nodes}. We assume that the repair occurs in rounds, where a repair round gets initiated when $r$ nodes experience failures. Thus, a single repair round repairs $r$ failed nodes. There is no loss during a repair round. During a repair round, the failed nodes are repaired with the help of bits transmitted from $d$ surviving nodes, called the \textit{helper nodes}. 
 
 A data collector (DC) corresponds to a request to reconstruct the file. Data collectors connect to any subset of $k$ active nodes and retrieve all the stored data in these nodes. This is called the \textit{reconstructability property}. In general, the repair is functional, i.e., the repaired content of the node may not be the same as the original content, but it satisfies the reconstructability property. 
 
 \begin{definition} 
The repair bandwidth $\gamma=d\beta$ is defined as the total number of bits the helper nodes broadcast in a repair round.
\end{definition}

\subsection{Subspace view}
Consider a node storing $\alpha$ linearly independent elements $y_1, \ldots, y_{\alpha}$ from $GF(q^m)$ (possible only if $\alpha\leq m$), then any linear operations performed on these finite field elements, which can be viewed as $m-$dimensional vectors over $GF(q)$, lie in the same subspace. Hence, the node is said to store the subspace of dimension $\alpha$, denoted by $W_i \equiv \spn\{y_i\}, i=1,\ldots, \alpha$. For a set of nodes denoted by $\mathcal{A}$, the subspace stored by $\mathcal{A}$ is denoted by $W_{\mathcal{A}} = \sum_{i\in \mathcal{A}} W_i$, where the addition operation on subspaces denotes the direct sum of subspaces. By abuse of notation, $W_i$ can also denote the random variables of the stored information in the nodes. The Shannon entropic measures on random variables can be redefined as a measure $``\dim(.)"$ (subspace dimension) on intersection, union and set difference of subspaces. The following identities hold \cite{79902}:
\begin{align*}
H(W_A)&= \dim(W_A)\\
H(W_1,\ldots,W_l) &= \dim\left( \sum_{i=1}^{l} W_i \right)\\
H(W_A \vert W_B) &= \dim(W_A \setminus W_B)\\
I(W_A;W_B) &= \dim(W_A \cap W_B)\\
I(W_A;W_B;W_C) &= \dim(W_A \cap W_B \cap W_C).
\end{align*}

For each set of parameters, a $(n, k, \gamma,d, \alpha, r)$ tuple is feasible, if a code with storage $\alpha$ and repair bandwidth $\gamma$ exists.

\begin{figure}
    \centering
    \includegraphics[scale=0.3]{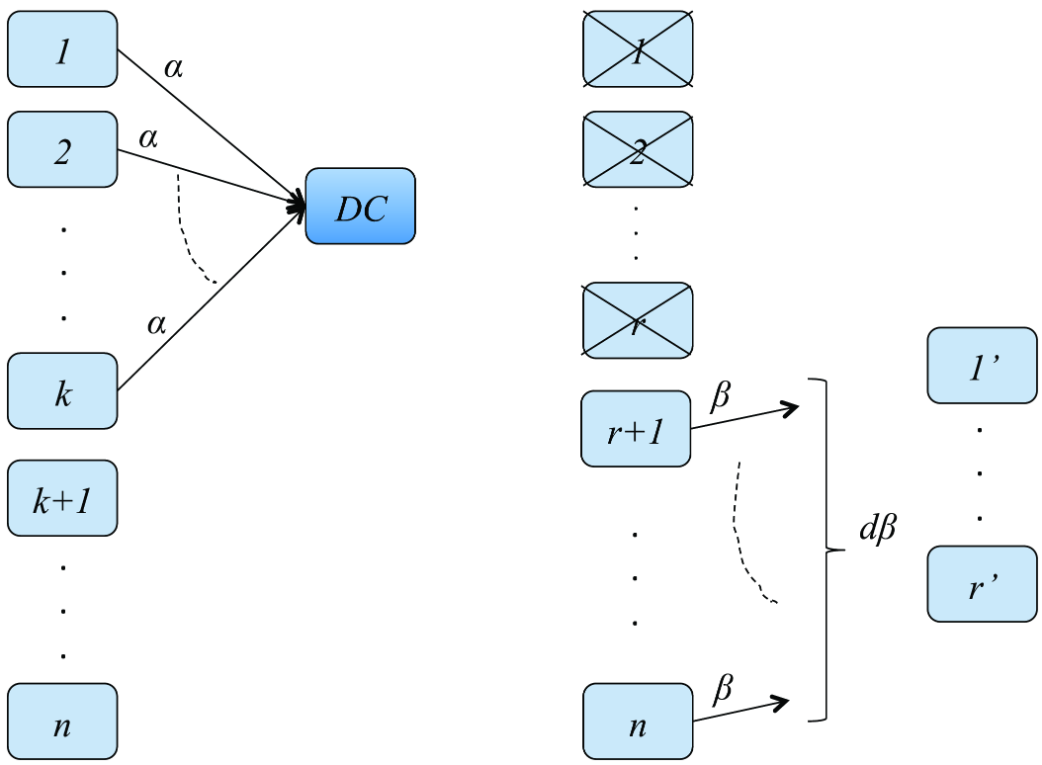}
    \caption{An illustration of the data collection (reconstructability property) and functional repair of $r$ nodes where $d$ nodes are helpers.}
    \label{fig:my_label}
\end{figure}

\begin{comment}
\begin{theorem}\label{theorem 1}
For any $\alpha \geq \alpha^{*}(n,k,\gamma, r)$, the points $(n,k, \gamma, \alpha, r)$ are feasible, and linear network codes suffice to achieve them. It is information theoretically impossible to achieve points with $\alpha < \alpha^{*}(n,k,\gamma, r)$. If $r$ divides $k$, the threshold function $\alpha^{*}(n,k,\gamma, r)$ is given by:
\begin{align}
\alpha^{*}(n,k,\gamma, r)=\left\{
                \begin{array}{ll}
                  \frac{M}{k} \hspace{15.5mm} \gamma \in \left[ f(0), \infty \right) \\
                  \frac{M - g(i)\gamma}{k-ir} \hspace{5mm} \gamma \in \left[ f(i), f(i-1)\right] 
                \end{array}
              \right.
\end{align}
where, for $i={1,2,\ldots , \frac{k}{r}-1},$
\begin{align}
f(i)&\triangleq\frac{2M(n-r)}{(2k-r(i+1))i+\frac{2k}{r}(n-k)},\\
g(i)&\triangleq \frac{1}{2}\left(2n-2k-r+ir\right)\frac{ir}{n-r}.
\end{align}
\end{theorem}
\end{comment}

\section{MBR point construction for multiple node failures}
\begin{theorem} \label{theorem:mbr} 
\cite{7541450} For any $\alpha \geq \alpha^{*}(n,k,\gamma,d, r)$, the points $(n,k, \gamma,d, \alpha, r)$ are feasible, and linear network codes suffice to achieve them. If $r$ divides $k$, the minimum repair bandwidth point is achieved by the pair $(\alpha_{MBR}, \gamma_{MBR}^*)= \frac{2M}{k(2d-k+r)} (d,rd)$.
\end{theorem}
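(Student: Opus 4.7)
The plan is to recast the theorem as a min-cut/max-flow argument on an information flow graph, following the template of Dimakis et al.\ for single-node repair but adapted to the broadcast multi-node setting, and then specialize to the MBR corner.

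First I would set up the information flow graph. Each storage node $v$ is split into $v^{\mathrm{in}}$ and $v^{\mathrm{out}}$ connected by an edge of capacity $\alpha$. Each repair round is modeled by an auxiliary relay: the $d$ helpers each send an edge of capacity $\beta$ into the relay, and the relay sends infinite-capacity edges to $v^{\mathrm{in}}$ of each of the $r$ newcomers, reflecting the broadcast nature of the channel (so a single transmission of $d\beta$ bits serves all $r$ newcomers). The source $s$ is connected by infinite edges to the $v^{\mathrm{in}}$ of the original $n$ nodes, and a data collector $t$ is connected from the $v^{\mathrm{out}}$ of the $k$ queried nodes by infinite edges.

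Second, I would evaluate the min-cut for a worst-case data collector that queries $k=\ell r$ nodes organized into $\ell$ groups $G_0,\dots,G_{\ell-1}$ ordered by the repair round in which each group was last written. For each $G_i$ there are only two sensible cut choices: (a) cut the $r$ storage edges of that group, contributing $r\alpha$; or (b) place the storage on the sink side and cut the repair-round relay's incoming edges from helpers lying on the source side. Since helpers that already lie in earlier groups $G_0,\dots,G_{i-1}$ are on the sink side of the cut and contribute no crossing edge, at most $(d-ir)^+$ helper edges of capacity $\beta$ each cross. Taking the minimum option in each group and summing gives the standard bound $M\le\sum_{i=0}^{\ell-1}\min\{r\alpha,\,(d-ir)^+\beta\}$, which defines $\alpha^*(n,k,\gamma,d,r)$.

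Third, I would optimize for minimum repair bandwidth. At MBR we want $\gamma=d\beta$ as small as possible, which forces the repair-bandwidth term to be active in every summand, i.e.\ $r\alpha\ge d\beta$; taking equality at the smallest admissible $\alpha$ yields $\alpha=d\beta/r$. The bound collapses to $M\le\beta\sum_{i=0}^{\ell-1}(d-ir)=\beta\cdot k(2d-k+r)/(2r)$, and solving for $\beta$ reproduces $\beta=2Mr/(k(2d-k+r))$, whence $(\alpha_{MBR},\gamma_{MBR}^*) = (d\beta/r,\,d\beta) = \tfrac{2M}{k(2d-k+r)}(d,rd)$, exactly the claimed pair.

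Finally, for achievability one invokes the standard network-coding result that linear codes over a sufficiently large field suffice whenever the cut-set bound is satisfied; at the MBR corner the explicit deterministic construction is the subject of the rest of the paper. The main obstacle in carrying out the converse cleanly is the bookkeeping in step (b): one must verify that the ordering of groups and the choice of helper sets can actually be \emph{realized} in an information flow graph so that the cut attains the claimed value, rather than being an artifact of the model. The hypothesis $r\mid k$ is what makes the $\ell$-fold partition exact; relaxing it would require an endpoint analysis that the MBR statement sidesteps.
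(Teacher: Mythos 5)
This theorem is not proved in the paper at all: it is imported from the cited reference \cite{7541450} (with the broadcast variant treated in \cite{7459908,8613401}), and the paper's own contribution is only the explicit construction achieving the MBR corner in the sections that follow. So there is no in-paper proof to compare against; what you have written is a reconstruction of the standard cut-set argument from those sources, and in outline it is the right one. Your flow-graph model (relay node with $d$ incoming edges of capacity $\beta$ and free edges to the $r$ newcomers, capturing the broadcast/centralized equivalence), the bound $M\le\sum_{i=0}^{k/r-1}\min\{r\alpha,(d-ir)\beta\}$, and the arithmetic $\sum_{i=0}^{k/r-1}(d-ir)=\tfrac{k}{2r}(2d-k+r)$ leading to $(\alpha_{MBR},\gamma^*_{MBR})=\tfrac{2M}{k(2d-k+r)}(d,rd)$ are all correct.

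Two steps would need real work to make this a proof rather than a sketch. First, the claim that linear network codes suffice for every $\alpha\ge\alpha^*$ does not follow immediately from finite multicast network coding: the information flow graph grows without bound as failures accumulate, so one must argue that a single finite field supports every data collector that can ever arise (this is the known subtlety already present in the single-failure analysis of \cite{5550492}, and it is precisely the gap the explicit construction in this paper is designed to close at the MBR point). Second, for the converse one must exhibit a concrete failure/repair sequence and data-collector choice whose min-cut actually \emph{equals} $\sum_i\min\{r\alpha,(d-ir)\beta\}$, and separately show no reachable graph has a smaller min-cut; you flag the first half of this as ``bookkeeping'' but neither half is carried out. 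As a blind reconstruction of the cited result your proposal is sound in structure; as a standalone proof it is incomplete at exactly those two points.
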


We provide a construction of a functional repair code for the broadcast setting using linearized polynomials, which were first used by Gabidulin for the construction of rank-metric codes \cite{gabidulin1985theory}.
\subsection{Linearized Polynomials}
An important component of our construction is linearized polynomial and their special properties. 

A linearized polynomial 
\begin{align}
    f(x)=\sum_{i=1}^{P}a_i x^{q^{i-1}} ,\ \ \  a_i \in \mathbb{F}_{q^m}
\end{align}
can be uniquely identified from evaluations at any $P$ points $x=\theta_i \in \mathbb{F}_{q^m}, i=1,2,\ldots, P$, that are linearly independent over $\mathbb{F}_q$.

Another relevant property of linearized polynomials is that they satisfy the following condition
\begin{align}
    f(ax+by)=af(x)+bf(y), a,b\in \mathbb{F}_q, x,y \in \mathbb{F}_{q^m},
\end{align}
that is, given a set of points on a linearized polynomial, any linear combination of the points also lies on the polynomial. 

\subsection{Code construction}

\begin{comment}
\begin{corollary}
The minimum repair bandwidth point is achieved by the pair $(\alpha_{MBR}, \gamma_{MBR}^*)= \frac{2M}{k(2n-k-r)} ((n-r),r(n-r))$.
\end{corollary}
\end{comment}

Consider a file of $M$ bits. We split the file into $\frac{k}{2}(2d-k+r)$ packets, denoted by $\{m_1, \ldots, m_{\frac{k}{2}(2d-k+r)} \}$. Thus each packet is of size $\frac{2M}{k(2d-k+r)}$ bits. Define the linearized polynomial 
\begin{align}
 f(x)=\sum_{i=1}^{\frac{k}{2}(2d-k+r)}m_i x^{q^{i-1}} ,\ \ \  m_i \in \mathbb{F}_{q^m}
\end{align}
in a field $GF(q^m)$. If a DC receives any $\frac{k}{2}(2d-k+r)$ linearly independent points on the polynomial $f(x)$, it can reconstruct $f(x)$ by interpolation, and thus reconstruct the file.\\

Pick $d$ linearly independent points on $f(x)$, denoted by $(x_1,y_1), \ldots , (x_{d},y_{d})$, where $y_i=f(x_i)$ for all $i=1,\ldots,d$. Store these points at node $1$. Pick another $d$ linearly independent points that are also linearly independent from the points stored at node $1$, denoted by $(x_{d+1},y_{d+1}), \ldots , (x_{2d},y_{2d})$, on $f(x)$, and store them at node $2$, and so on, till $d$ nodes are filled with linearly independent points. For the remaining $n-d$ nodes, fill them as if they are being repaired by the $d$ nodes already filled. The scheme for repair is detailed in the Section \ref{repair}.\\

At each node, encode the points with a $(n-1,d)$ \textit{systematic} MDS code to obtain $n-1$ coded symbols. The symbols at node $i$ are denoted by $\{w_{ij}\}$, where $j\in [n]\setminus \{i \}$.

\subsection{Repair}\label{repair}
Suppose the first repair round repairs the nodes $n-r+1$ to $n$ with the helper nodes $1,\ldots,d$. Each node $i\in [d]$ transmits the points $w_{ij}, j\in [n-r+1:n]$. The total number of points broadcasted by $d$ helper nodes is $rd$. The broadcasted points must be linearly independent, which will become evident that it holds as the repair scheme is explained. Arrange these received points in a $d\times r$ matrix, where row $i$ contains the points transmitted by node $i$. In the following matrix representation of the received points, we denote the points with the node from which it was transmitted only. It must be noted that all elements in the following matrix actually represent distinct points.
\begin{align}\label{unpermuted}
   \mathbf{Y}=\left[ \begin{array}{cccc}
        1 & 1 & \cdots & 1 \\
        2 & 2 & \cdots & 2 \\
        \vdots & \ddots & \ddots & \vdots \\
        d & d & \cdots & d
    \end{array} \right]
\end{align}
We then permute the columns of the above matrix in such a way that no row has two points transmitted by the same node. %, and no two rows are permutations of each other \textbf{(maybe explain this)}.
A circular permutation achieving this condition looks like the following
\begin{align} \label{permuted Y}
  \mathbf{Y}= \left[ \begin{array}{cccc}
        1 & 2 & \cdots & r \\ 
        2 & 3 & \cdots & r+1 \\
        \vdots & \ddots & \ddots & \vdots \\
        d & 1 & \cdots & r-1
    \end{array} \right].
\end{align}
Now, assume that each node is given a $(2r,r)$ systematic MDS code generator matrix $\mathbf{G}=[\mathbf{I} \ \ \mathbf{M}]$. $\mathbf{G}$ has rank $r$, implying that any $r$ columns are linearly independent.
Let each newcomer multiply the permuted point-matrix $\mathbf{Y}$ with the local $r \times r$ invertible encoding matrix $\mathbf{M}$ to get $\mathbf{Y}'=\mathbf{YM}$. The points in column $i$ of $\mathbf{Y}'$ are stored on the $i^{th}$ newcomer, and encoded with the local $(n-1,d)$ systematic MDS code.\\
The above scheme ensures the following properties to hold:
\begin{description}
    \item[\textbf{L1:}] For $i\neq j$, $\dim(W_i \cap W_j)=0$, or, $I(W_i ; W_j)=0$. 
    % \item $\dim(\sum_{i\in G}W_i \cap W_j)=2r$, $j\in G^c$.
      \item[\textbf{L2:}] For any set of nodes $\mathcal{A}$, $\lvert \mathcal{A} \rvert \leq r$, the following holds :
    $\dim(\sum_{i\in \mathcal{A}} W_i)=\sum_{i\in \mathcal{A}} \dim(W_i)$. \\
    This is equivalent to the condition - $H(W_{\mathcal{A}})=\sum_{i\in \mathcal{A}} H(W_i)$.  This property is the consequence of permuting the $\mathbf{Y}$ matrix so that packets from the same node are not repeated in the same row. Encoding each row with the full rank matrix $\mathbf{M}$, extracted from an MDS code, ensures the independence of any $r$ nodes. 
    %\item[L3:] For any set of $r$ nodes denoted by $G$, $\dim(\sum_{i\in G}W_i \cap W_j) \leq 2r$ for any $j\in G^{c}$. Equality holds if the last repair of all the nodes in $G$ happened together, or, if the $r$ nodes in $G$ were involved in repairing node $j$ and a particular permutation in the encoding procedure corresponded exactly to the $r$ nodes in $G$. This is the consequence of the permutation.\\
    \item[\textbf{L3:}] Given a node $A$, and a set of nodes denoted by $\mathcal{B}$, $\lvert \mathcal{B} \rvert \leq d$, partition $\mathcal{B}$ into two disjoint subsets $\mathcal{B}_1$ and $\mathcal{B}_2$. Then the following holds - $\dim(W_A \cap W_{\mathcal{B}_1} \cap W_{\mathcal{B}_2})=0$, or, $I(W_A;W_{\mathcal{B}_1};W_{\mathcal{B}_2})=0$. This is because each node transmits a distinct point to repair any node, of which any $d$ of them are linearly independent because of the rank $d$ MDS encoding in each node. Since $d\geq k$, this allows for all admissible parameters.
    \item[\textbf{L4:}] Given a node $A$, and a set of different $r$ nodes denoted by $\mathcal{R}$, then $I(W_A ; W_{\mathcal{R}})\leq r$, where equality holds iff $1)$ the set $\mathcal{R}$ were helper nodes while repairing node $A$, and a particular row of $\mathbf{Y}$ consisted of the nodes in $\mathcal{R}$; or, $2)$ node $A$ was a helper node while repairing the set $\mathcal{R}$ which failed together.
\end{description}
\subsection{Reconstruction} \label{reconstruction}
Suppose a DC accesses the nodes $1,\ldots,k$, denoted by $\mathcal{A}_{dc}$. The points available at the $k$ nodes should be enough to interpolate the polynomial $f(x)$; hence, the necessary and sufficient condition for successful reconstruction is $\dim(W_{\mathcal{A}_{dc}}) \geq \frac{k}{2}(2d-k+r)$. The following lemma will be helpful in showing that the reconstructability property is satisfied.
\begin{lemma}\label{lemma:1}
Consider a node $A$, and a set of other $l\leq d$ nodes. Partition the $l$ nodes into sets of $r$ nodes denoted by $\mathcal{R}_1, \ldots, \mathcal{R}_{\lfloor \sfrac{l}{r} \rfloor} $, and the remaining set of nodes denoted by $\mathcal{R}'$. Then,
\begin{align}
    \dim(W_A \cap \sum_{i=1}^{l} W_i) = \sum_{i=1}^{\sfrac{l}{r}} \dim(W_A \cap W_{\mathcal{R}_i}) 
\end{align}
\begin{proof}
\begin{align*}
    &\dim(W_A \cap \sum_{i=1}^{l} W_i)= I(W_A;W_1,\ldots,W_l)\\
    & = I(W_A;W_{\mathcal{R}_1},\ldots,W_{\mathcal{R}_{\sfrac{l}{r}}}, W_{\mathcal{R}'})\\
    &= I(W_A;W_{\mathcal{R}_1}) + I(W_A;W_{\mathcal{R}_2}\ldots, W_{\mathcal{R}'} \vert W_{\mathcal{R}_1})\\
    & \overset{(a)}{=} I(W_A;W_{\mathcal{R}_1}) + I(W_A;W_{\mathcal{R}_2}\ldots, W_{\mathcal{R}'})
    \end{align*}
    where $(a)$ holds due to $\textbf{L3}$, which applied to a well known identity from multivariate mutual information, gives $I(W_A;W_{\mathcal{R}_2},\ldots, W_{\mathcal{R}'};W_{\mathcal{R}_1})=I(W_A;W_{\mathcal{R}_2},\ldots, W_{\mathcal{R}'}) - I(W_A;W_{\mathcal{R}_2},\ldots, W_{\mathcal{R}'}\vert W_{\mathcal{R}_1})=0$. Using this result inductively, we get
    \begin{align}
        I(W_A;W_{1},\ldots,W_{l})&= \sum_{i=1}^{\sfrac{l}{r}} \nonumber I(W_A;W_{\mathcal{R}_i}) +  I(W_A;W_{\mathcal{R}'}) \nonumber \\
        & \overset{(b)}{=} \sum_{i=1}^{\sfrac{l}{r}}  I(W_A;W_{\mathcal{R}_i}) \label{decomposition} \\
        &= \sum_{i=1}^{\sfrac{l}{r}} \dim(W_A \cap W_{\mathcal{R}_i}) \nonumber
    \end{align}
    where $(b)$ follows from $\textbf{L2}$.
\end{proof}
\end{lemma}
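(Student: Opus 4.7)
The plan is to translate the statement into the language of mutual information using the dictionary provided in the subspace view, and then repeatedly apply the chain rule for mutual information, using \textbf{L3} to eliminate conditioning and \textbf{L2} to dispose of the remainder term.

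First I would restate the target identity as $I(W_A;W_1,\ldots,W_l)=\sum_{i=1}^{\lfloor l/r\rfloor}I(W_A;W_{\mathcal{R}_i})$, since $\dim(W_A\cap W_{\mathcal B})=I(W_A;W_{\mathcal B})$ by the subspace-to-entropy correspondence. Grouping $W_1,\ldots,W_l$ into the $\mathcal{R}_i$ blocks and the residual $\mathcal{R}'$, the chain rule gives $I(W_A;W_{\mathcal{R}_1},\ldots,W_{\mathcal{R}_{\lfloor l/r\rfloor}},W_{\mathcal{R}'})=I(W_A;W_{\mathcal{R}_1})+I(W_A;W_{\mathcal{R}_2},\ldots,W_{\mathcal{R}'}\mid W_{\mathcal{R}_1})$.

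The next step is the key one: I would argue that conditioning on $W_{\mathcal{R}_1}$ does not reduce mutual information, i.e.\ $I(W_A;W_{\mathcal{R}_2},\ldots,W_{\mathcal{R}'}\mid W_{\mathcal{R}_1})=I(W_A;W_{\mathcal{R}_2},\ldots,W_{\mathcal{R}'})$. Using the standard identity $I(X;Y\mid Z)=I(X;Y)-I(X;Y;Z)$, this reduces to showing $I(W_A;W_{\mathcal{R}_2},\ldots,W_{\mathcal{R}'};W_{\mathcal{R}_1})=0$. But this triple mutual information is precisely $\dim(W_A\cap W_{\mathcal{B}_1}\cap W_{\mathcal{B}_2})$ with $\mathcal{B}_1=\mathcal{R}_1$ and $\mathcal{B}_2=\mathcal{R}_2\cup\cdots\cup\mathcal{R}'$, two disjoint subsets of a node set of size at most $d$, so \textbf{L3} applies and the quantity is zero. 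Iterating this argument across the blocks yields $I(W_A;W_1,\ldots,W_l)=\sum_{i=1}^{\lfloor l/r\rfloor}I(W_A;W_{\mathcal{R}_i})+I(W_A;W_{\mathcal{R}'})$.

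Finally, I would dispose of the residual term $I(W_A;W_{\mathcal{R}'})$. Since $|\mathcal{R}'|<r$, the set $\mathcal{R}'\cup\{A\}$ has cardinality at most $r$, so \textbf{L2} gives $\dim(W_A+W_{\mathcal{R}'})=\dim(W_A)+\dim(W_{\mathcal{R}'})$, equivalently $I(W_A;W_{\mathcal{R}'})=0$. Re-expressing $I(W_A;W_{\mathcal{R}_i})$ as $\dim(W_A\cap W_{\mathcal{R}_i})$ completes the proof.

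The main obstacle I anticipate is the triple-intersection step: one must be careful that the decomposition into $\mathcal{B}_1=\mathcal{R}_1$ and $\mathcal{B}_2=\mathcal{R}_2\cup\cdots\cup\mathcal{R}'$ genuinely satisfies the hypothesis $|\mathcal{B}_1\cup\mathcal{B}_2|\le d$ of \textbf{L3} at every step of the induction, and that the triple mutual information identity, which is an entropic rather than purely subspace statement, is indeed captured by the $\dim(\cdot)$ measure. Once that equivalence is in hand, the rest of the argument is essentially a bookkeeping chain-rule expansion.
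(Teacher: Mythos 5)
Your proposal is correct and follows essentially the same route as the paper's own proof: converting the intersection dimension to mutual information, peeling off one block $\mathcal{R}_i$ at a time via the chain rule, invoking \textbf{L3} through the identity $I(X;Y\mid Z)=I(X;Y)-I(X;Y;Z)$ to drop the conditioning, and killing the residual term $I(W_A;W_{\mathcal{R}'})$ with \textbf{L2} since $\lvert\mathcal{R}'\cup\{A\}\rvert\le r$. Your explicit justification of the residual step is in fact slightly more detailed than the paper's, which simply cites \textbf{L2}.
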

\begin{comment}
\begin{lemma} \label{lemma:countdim}
Let the $k$ nodes in $G_{dc}$ be partitioned into $\sfrac{k}{r}$ disjoint groups of $r$ nodes, denoted by $G_i, i=1,\ldots,\sfrac{k}{r}$. Then,
\begin{align}
    \dim\left(\sum_{i=1}^{\sfrac{k}{r}} W_{G_i}\right)=\sum_{i=1}^{\sfrac{k}{r}} & \dim\left(\ W_{G_i}\right) \nonumber \\ 
    & - \sum_{i=1}^{\sfrac{k}{r}-1}\dim\left(W_{G_i} \cap \sum_{j=i+1}^{\sfrac{k}{r}} W_{G_j}\right)
\end{align}
\end{lemma}
\begin{proof}
\begin{align*}
    \dim(\sum_i W_{i})&=\sum_{i=1}^{k} \left[ \dim(W_i)- \dim(W_i \cap \sum_{j=1}^{i-1} W_j)\right]\\
   % &= \sum_{i=1}^{k} \left[ \dim(W_i)- \sum_{j=i+1}^{k}\dim(W_i \cap W_j)\right]\\
    &= \sum_{i=1}^{k} \left[ \dim(W_i) \right] - \dim(W_i \cap \sum_{j=1}^{i-1} W_j)
\end{align*}
\end{proof}
\end{comment}
\begin{theorem}
If a DC accesses $k$ nodes, the dimension of the space obtained from those nodes is $\frac{k}{2}(2d-k+r)$. Thus, the DC can reconstruct the file.
\end{theorem}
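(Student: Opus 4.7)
The plan is to partition the DC's $k$ accessed nodes arbitrarily into $k/r$ disjoint groups $G_1,\ldots,G_{k/r}$ of size $r$ (valid because $r\mid k$) and to compute the dimension via the telescoping identity
\begin{align*}
    \dim\Bigl(\sum_{i=1}^{k/r}W_{G_i}\Bigr) = \sum_{i=1}^{k/r}\dim(W_{G_i}) - \sum_{i=2}^{k/r}\dim\Bigl(W_{G_i}\cap\sum_{j<i}W_{G_j}\Bigr).
\end{align*}
Property \textbf{L2} immediately gives $\dim(W_{G_i})=rd$ for every $i$, so the first sum is $kd$. The heart of the argument is then to show that the overlap term satisfies $\dim(W_{G_i}\cap\sum_{j<i}W_{G_j})\leq(i-1)r^2$ for each $i\geq 2$.

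To prove this bound I would fix any ordering $A_1,\ldots,A_r$ of $G_i$ and apply the mutual-information chain rule
\begin{align*}
    I\Bigl(W_{G_i};\sum_{j<i}W_{G_j}\Bigr)=\sum_{m=1}^{r}I\Bigl(W_{A_m};\sum_{j<i}W_{G_j}\,\Big|\,W_{A_1},\ldots,W_{A_{m-1}}\Bigr).
\end{align*}
Since $\{A_1,\ldots,A_m\}\subseteq G_i$ has at most $r$ elements, \textbf{L2} makes $W_{A_m}$ independent of $W_{A_1},\ldots,W_{A_{m-1}}$, so using $I(X;Y\mid Z)=I(X;Y,Z)-I(X;Z)$ the $m$-th term collapses to $I(W_{A_m};W_{G_1},\ldots,W_{G_{i-1}},W_{A_1},\ldots,W_{A_{m-1}})$. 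I would then invoke Lemma~\ref{lemma:1} on $A_m$ together with this set of $(i-1)r+m-1$ other nodes, partitioning it into the full $r$-groups $G_1,\ldots,G_{i-1}$ with a sub-$r$ leftover $\{A_1,\ldots,A_{m-1}\}$; the lemma's absorption step removes the leftover and reduces the quantity to $\sum_{j<i}I(W_{A_m};W_{G_j})$. Each summand is at most $r$ by \textbf{L4}, hence each chain-rule term is at most $(i-1)r$ and the total overlap is at most $r\cdot(i-1)r=(i-1)r^2$.

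Summing the overlap bounds gives total overlap at most $r^2\binom{k/r}{2}=k(k-r)/2$, and therefore
\begin{align*}
    \dim\Bigl(\sum_{i\in\mathcal{A}_{dc}}W_i\Bigr)\geq kd-\frac{k(k-r)}{2}=\frac{k}{2}(2d-k+r).
\end{align*}
Since every $W_i$ lies in the image of $f$, a space of dimension exactly $P=\frac{k}{2}(2d-k+r)$, the reverse inequality is automatic, so equality holds and the DC acquires $P$ $\mathbb{F}_q$-linearly independent evaluations of $f$ and recovers the file by interpolation. The main obstacle I expect is the careful bookkeeping of the chain-rule/Lemma~\ref{lemma:1} combination: one has to check both that the in-group predecessors of $A_m$ drop out of the conditioning via \textbf{L2} and that they are absorbed as the sub-$r$ leftover in Lemma~\ref{lemma:1}, so that only the full $r$-groups contribute and \textbf{L4} can be applied uniformly to each of them.
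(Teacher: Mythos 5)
Your proposal is correct and follows essentially the same route as the paper: both arguments reduce $\dim(\sum_i W_i)$ via the entropy/mutual-information chain rule to a sum of terms $I(W_i;W_{\mathcal{R}_j})$, each decomposed by Lemma~\ref{lemma:1} (with \textbf{L2} absorbing the sub-$r$ leftover) and bounded by $r$ via \textbf{L4}, yielding the same total overlap $\tfrac{k(k-r)}{2}$. The only differences are cosmetic bookkeeping --- you telescope group-by-group with an inner node-level chain rule, whereas the paper expands node-by-node directly --- plus your (correct, and slightly more complete) remark that the reverse inequality is automatic because everything lives in the $P$-dimensional coefficient space of $f$.
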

\begin{proof} Without loss of generality, assume that the DC accesses nodes $1,\ldots, k$. We have
\begin{align*}
    \dim(\sum_{i=1}^{k}W_{i})&= H(W_1,\ldots, W_k)\\
    &=\sum_{i=1}^{k} H(W_i \vert W_{i-1},\ldots,W_1)\\
    %&= \sum_{i=1}^{k} \left[ \dim(W_i)- \dim(W_i \cap \sum_{j=1}^{i-1} W_j)\right] \\
    &= \sum_{i=1}^{k} \left[ H(W_i) - I(W_i;W_{i-1},\ldots,W_1)\right] \\
    &\overset{(c)}{=} \sum_{i=1}^{k} H(W_i) - \sum_{i=1}^{k} \sum_{j=1}^{\sfrac{(i-1)}{r}} I(W_i;W_{\mathcal{R}_j}) \\
    &\overset{(d)}{\geq} kd- \left(r(r) + r(2r) + \cdots + r(k-r) \right)\\
    &= \frac{k}{2}(2d-k+r)
\end{align*}
where $(c)$ follows from Lemma \ref{lemma:1}, and $(d)$ holds due to $\textbf{L4}$.
\end{proof}

\section{Code construction for interior point, $d=n-r$}
\begin{theorem} \label{theorem3}
\cite{8613401} The pair $(\alpha, \gamma^*)= \frac{2M}{k((2(n-2r)-(k-r))+2r(k-r))} ((n-2r),r(n-r))$ is an interior point on the tradeoff curve, and is achievable with our coding framework.
\end{theorem}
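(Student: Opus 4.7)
The plan is to mimic the MBR construction of Section~III, adding one further linear compression layer so that each node stores only $\alpha = n-2r$ packets (rather than the MBR value $d = n-r$) while the helpers still broadcast the same $d \cdot r$ packets per repair round. Concretely, I would split the file into $\frac{k}{2}\bigl(2(n-2r)-(k-r)+2r(k-r)\bigr)$ packets, use them as the coefficients of a linearized polynomial $f(x)$ over $\mathbb{F}_{q^m}$, assign to each node $n-2r$ linearly independent evaluations of $f$, seed enough initial nodes directly with fresh points, and fill the rest by running the repair protocol.

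The repair protocol borrows the first two stages of the MBR scheme verbatim: with $d = n-r$ helpers each transmitting $\beta = r$ packets, I form the $d \times r$ matrix $\mathbf{Y}$, circularly permute its columns so that no row repeats a sender, and compute $\mathbf{Y}' = \mathbf{Y}\mathbf{M}$ using the $r \times r$ MDS matrix $\mathbf{M}$. The new ingredient is that each newcomer $j$ then applies a local $(n-2r) \times (n-r)$ matrix $\mathbf{N}_j$, drawn from an $(n-r, n-2r)$ MDS code, to its column of $\mathbf{Y}'$ and stores the resulting $n-2r$ packets, which are in turn further protected for transmission by the inner systematic MDS code exactly as in Section~III.

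I would then re-derive the subspace-level properties. Analogues of \textbf{L1}, \textbf{L2} and \textbf{L3} should go through essentially verbatim, because the MDS structure of both $\mathbf{M}$ and each $\mathbf{N}_j$ keeps small collections of nodes linearly independent, and point-distinctness across repair rounds rules out nontrivial triple intersections. The critical new ingredient is a sharpened analogue of \textbf{L4}: for a node $A$ and an $r$-subset $\mathcal{R}$ that participated with $A$ in a common repair round, the intersection $I(W_A; W_{\mathcal{R}})$ drops below the MBR value of $r$ precisely because of the $\mathbf{N}_j$ compression. Once this refined bound is in place, the reconstruction argument becomes the same chain-rule computation as in Section~III: applying the chain rule to $H(W_1, \ldots, W_k)$, decomposing each conditional mutual information via Lemma~\ref{lemma:1}, and substituting the refined \textbf{L4}, the count collapses to exactly $\frac{k}{2}\bigl(2(n-2r)-(k-r)+2r(k-r)\bigr)$ packets, matching the number of coefficients of $f(x)$ and certifying reconstruction from any $k$ nodes.

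The main obstacle is this refined \textbf{L4}. The family $\{\mathbf{N}_j\}$, together with $\mathbf{M}$, has to be coordinated so that the compression discards exactly the right number of pairwise-intersection dimensions at each step of the chain-rule accounting --- neither so many as to destroy reconstructability, nor so few that we remain stuck at the MBR point. A separate but routine check is required for the boundary case in which the $r$-subset $\mathcal{R}$ straddles two distinct repair rounds, and the initial seeding must be chosen to avoid accidental linear dependencies among the freshly placed points. After \textbf{L4} is established, the remaining arithmetic mirrors step $(d)$ of the MBR reconstruction proof and the theorem follows.
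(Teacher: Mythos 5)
There is a genuine gap, and it sits exactly where you flagged it: the ``sharpened \textbf{L4}'' your argument hinges on cannot take the form you describe, and the paper closes the gap by a different mechanism. A quick count shows why a uniform per-group reduction fails. With $\alpha=n-2r$ you have $\sum_{i=1}^{k}H(W_i)=k(n-2r)$, and the MBR-style chain rule subtracts one term $I(W_i;W_{\mathcal{R}_j})$ for each of the $\frac{k(k-r)}{2r}$ node/group pairs; bounding each by $r$ leaves $k(n-2r)-\frac{k(k-r)}{2}$, which falls short of the target $\frac{k}{2}\left(2(n-2r)-(k-r)\right)+r(k-r)=k(n-2r)-\frac{(k-r)(k-2r)}{2}$ by exactly $r(k-r)$. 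Spreading that saving uniformly over all pairs would force each $I(W_A;W_{\mathcal{R}_j})$ down to $\frac{r(k-2r)}{k}$, which is not an integer in general and has no natural MDS realization. The saving has to arrive as ``one whole group free per node,'' not as a fractional shaving of every pairwise-group intersection, and your proposal contains no mechanism that produces this.

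The paper's construction is also architecturally different from your two-stage compression. Instead of forming the MBR matrix $\mathbf{Y}'$ and then applying a second $(n-2r)\times(n-r)$ map $\mathbf{N}_j$, the paper reshapes the $r(n-r)$ broadcast points into an $(n-2r)\times 2r$ matrix $\mathbf{Y}$ --- the points of $n-2r$ helpers occupy the first $r$ columns and the $r^2$ points of the remaining $r$ helpers are stacked into the last $r$ columns --- and multiplies by a single $2r\times r$ matrix $\mathbf{M}$ drawn from a $(3r,2r)$ systematic MDS code. The decisive consequence is a \emph{strengthened} \textbf{L2}: any $2r$ nodes (not merely $r$) are mutually independent, while \textbf{L4} is stated for $2r$-subsets as $I(W_A;W_{\mathcal{B}})\leq 2r$ and \textbf{L3} becomes a conditional statement given an $r$-subset $\mathcal{B}_3$. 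In the reconstruction count this makes $I(W_i;W_{\mathcal{R}_1})=0$ and conditions every remaining term on $W_{\mathcal{R}_1}$, so the inner sum starts at $j=2$ and the subtracted total drops from $\frac{k(k-r)}{2}$ to $\frac{(k-r)(k-2r)}{2}$ --- exactly the $r(k-r)$ you need. Your file-splitting, parameters, and overall proof shape (chain rule plus Lemma~\ref{lemma:1}) match the paper, but without the widened $\mathbf{Y}$, the $(3r,2r)$ code, and the resulting $2r$-wise independence, the argument does not go through.
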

We consider $d=n-r$ in this section. Each file is divided into $\sfrac{1}{2}k((2(n-2r)-(k-r))+2r(k-r))$ subpackets, and the polynomial $f(x)$ is constructed accordingly. The storage and repair scheme is the same as in Section \ref{repair}, except that the matrix $\mathbf{Y}$ is of dimensions $(n-2r)\times 2r$, and $\mathbf{M}$ is a $2r\times r$ local encoding matrix, such that there is a $(3r,2r)$ MDS systematic generator matrix of the form $[\mathbf{I}\ \ \mathbf{M}]$. Each node stores $n-2r$ linearly independent points, which are then encoded with a local systematic $(n-1,n-2r)$ MDS code. 
The matrix $\mathbf{Y}$ is constructed by arranging the points received from $n-2r$ helper nodes in the first $r$ columns, like in Equation \eqref{unpermuted}, and the points received from the remaining $r$ helper nodes in the next $r$ columns. The elements of $\mathbf{Y}$ are rearranged so that no two points from a helper node lie in the same row, similar to Equation \eqref{permuted Y}, to obtain the following form,
\begin{align*}
    \mathbf{Y}= \left[  \begin{array}{c}
      \mathbf{S}_{(n-2r)\times r}  
    \end{array}  
    \begin{array}{|c}
         \mathbf{\phi}_{r\times r}\\ 
        \hline \vdots \\
         \hline \mathbf{\phi}_{r\times r}
    \end{array}
    \right]_{(n-2r)\times 2r}
\end{align*}
where the matrix $\mathbf{\phi}$ consists of the $r^2$ points from the last $r$ helper nodes. Then, the columns of $\mathbf{Y}'=\mathbf{Y}\mathbf{M}$, which is a $(n-2r) \times r$ matrix, are stored on the $r$ nodes respectively. \\

The property $\textbf{L1}$ is satisfied. Properties $\textbf{L2}$,$\textbf{L3}$ and $\mathbf{L4}$ are modified as follows:
\begin{description}
    \item[\textbf{L2:}]  For any set of nodes $\mathcal{A}$, $\lvert \mathcal{A} \rvert \leq 2r$, the following holds :
    $\dim(\sum_{i\in \mathcal{A}} W_i)=\sum_{i\in \mathcal{A}} \dim(W_i)$. \\
    This is equivalent to the condition - $H(W_{\mathcal{A}})=\sum_{i\in \mathcal{A}} H(W_i)$.
    \item[\textbf{L3:}] Given a node $A$, and a set of nodes denoted by $\mathcal{B}$ such that $\lvert \mathcal{B} \rvert \leq n-r$. Partition $\mathcal{B} $ into three disjoint sets $\mathcal{B}_1.\mathcal{B}_2$ and $\mathcal{B}_3$, such that $\lvert \mathcal{B}_3 \rvert =r$, or in other words, $\lvert \mathcal{B}_1 \cup \mathcal{B}_2 \rvert \leq n-2r$. Then the following holds - $I(W_A; W_{\mathcal{B}_1};W_{\mathcal{B}_2}\vert W_{\mathcal{B}_3})=0$. This is a consequence of the fact that each node encodes its stored symbols with a rank $n-2r$ MDS code.
    \item[\textbf{L4:}] Given a node $A$, and a set of different $2r$ nodes denoted by $\mathcal{B}$, the following holds- $I(W_A ; W_{\mathcal{B}}) \leq 2r $.
\end{description}
\textit{Modified Lemma \ref{lemma:1}:} For any node $A$, and a set of different nodes denoted by $ \mathcal{B}$, $\vert \mathcal{B} \vert \leq n-r$, which is partitioned into sets $\mathcal{R}_1,\ldots, \mathcal{R}'$, as in Lemma \ref{lemma:1}, the following holds due to properties $\mathbf{L2,L3}$ and $\mathbf{L4}$:
\begin{align*}
 I(W_A ; W_{\mathcal{B}}) &= \cancel{I(W_A ; W_{\mathcal{R}_1})}+I(W_A ; W_{\mathcal{B}\setminus \mathcal{R}_1}\vert W_{\mathcal{R}_1} )\\
    &\overset{(e)}{=} \sum_{j\geq 2} I(W_A; W_{\mathcal{R}_j}\vert W_{\mathcal{R}_1})
\end{align*}
 where $(e)$ follows similarly to Equation \eqref{decomposition} with the slight difference of conditioning.  Thus, the dimension of the space obtained by a DC accessing any $k$ nodes is 
\begin{align*}
    \dim(\sum_{i=1}^{k}W_{i})
    &= H(W_1,\ldots, W_k)\\
    &= \sum_{i=1}^{k} \left[ H(W_i) - I(W_i;W_{i-1},\ldots,W_1)\right] \\
    &\overset{(f)}{=} \sum_{i=1}^{k} H(W_i) - \sum_{i=1}^{k} \sum_{j=2}^{\sfrac{(i-1)}{r}} I(W_i;W_{\mathcal{R}_j}\vert W_{\mathcal{R}_1}) \\
    &\overset{(g)}{\geq} k(n-2r)- \left( r(r) + \cdots + r(k-2r) \right)\\
    &= \frac{k}{2}\left(2(n-2r)-(k-r)\right) + r(k-r)
\end{align*}
where $(f)$ follows from the modified Lemma 1 above, and $(g)$ holds due to $\mathbf{L4}$.
\subsection{Complexity analysis}
Each node must store a local $r \times r$ encoding matrix, and $n-r$ domain points of the polynomial. The subpacketization level of the scheme for the MBR point is $k(2d-k+r)$. Thus functional MBR codes can have a reasonable subpacketization while achieving the optimal repair bandwidth and low I/O cost. The finite field operations are in $GF(q^m), m\geq d^2 $. This is because while initially storing points in the nodes, the first $d$ nodes must be filled with $d$ linearly independent points, while the content of the remaining nodes can be generated as if they were being repaired by the first $d$ nodes. Addition and subtraction can be done in $\mathbb{O}(m)$, while multiplication and division can be done in $\mathbb{O}(m^{\log_2 3})$ using Karatsuba's algorithm \cite{karatsuba_multiplication_1962}. State of the art polynomial interpolation can be done in $\mathbb{O}(n\log n)$.

The scheme in this paper is partially repair-by-transfer, when the systematic symbols from the stored points are transmitted. When the non-systematic symbols are transmitted, more than one systematic symbol must be read to form the required linear combination. A sparse systematic MDS generator matrix is ideal for this application \cite{746771}.
Another advantage of the scheme is that it stitches together short MDS codes to form a larger code, thus each encoding operation can be done sequentially on a small number of elements.

\begin{figure}
    \centering
    \includegraphics[scale=0.3]{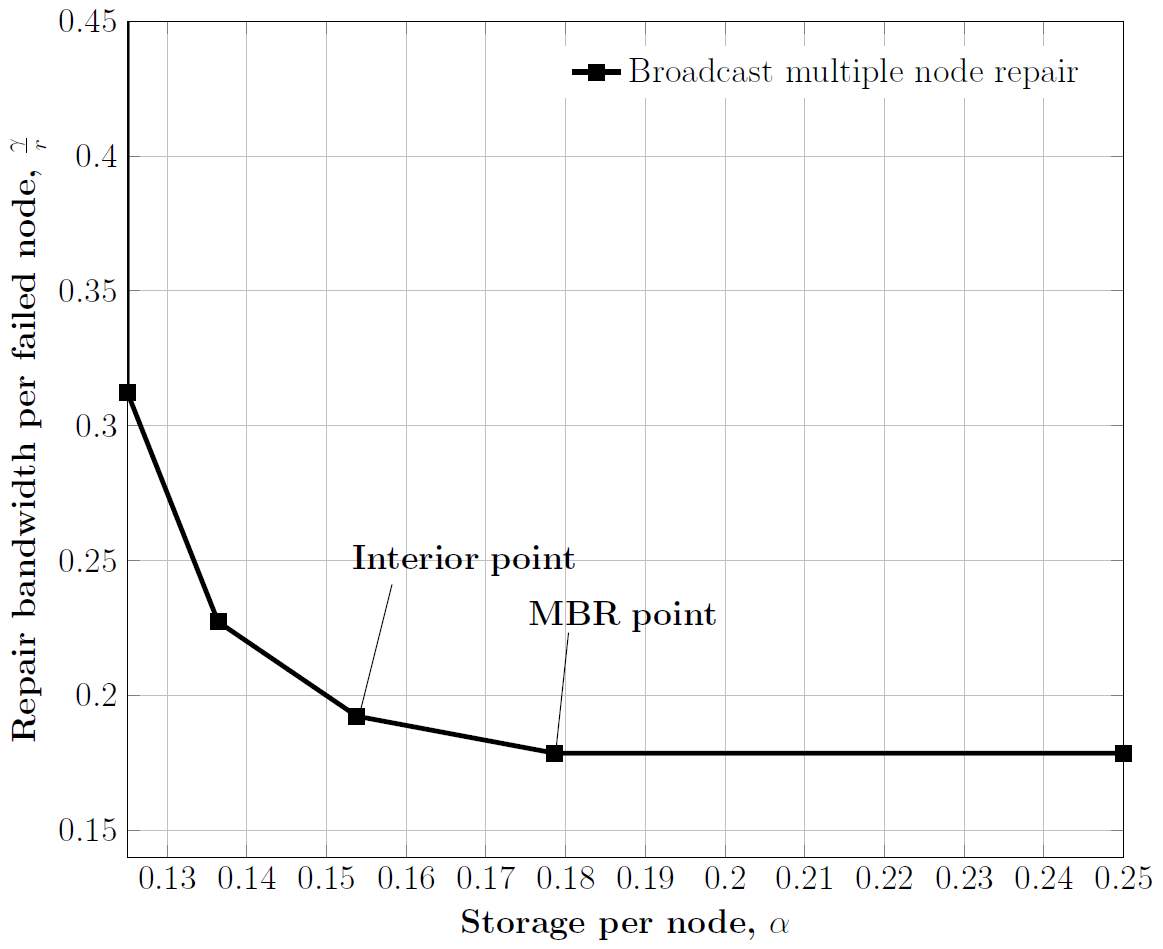}
    \caption{Repair bandwidth per failed node vs. storage per node for $n=12,k=8,d=10,r=2$. The points achieved in this paper are illustrated.}
    \label{fig:plot}
\end{figure}

\section{Conclusions}
We presented a practical code construction and repair scheme for functional repair of multiple node failures in a broadcast setting, achieving the MBR point as well as an interior point, as indicated in Fig. \ref{fig:plot}. We leave for future work to use the construction in this paper for achieving other interior points. The conditions on the stored subspaces provide an insight into the general properties of functional regenerating codes.

\begin{comment}
\begin{align*}
  \mathbf{Y}= \left[ \begin{array}{ccccc}
        1 & 2 & \cdots & r & n-2r+1 \\
        2 & 3 & \cdots & r+1 & n-2r+1 \\
        \vdots & \ddots & \ddots & \vdots & \vdots \\
        r & r+1 & \cdots & 2r-1 & n-2r+1 \\
        r+1 & r+2 & \cdots & 2r & n-2r+2 \\
        \vdots & \ddots & \ddots & \vdots & \vdots \\
        r^2 & r^2+1 & \cdots & r^2+r-1 & n-r \\
        r^2+1 & r^2+2 & \cdots & r^2+r & 0 \\
        \vdots & \ddots & \ddots & \vdots & \vdots \\
        n-2r & 1 & \cdots & r-1 & 0
    \end{array} \right]
\end{align*}
\end{comment}
 %Thus, even given $r$ nodes, there is an extra linearly independent point that can be obtained. Thus, when the DC connects to any $k$ nodes, it can obtain at least $2r(k-r)$ extra linearly independent points than the for the MBR point, i.e., the dimension of the subspace received from the $k$ nodes connected to the DC is $k((2(n-2r)-(k-r))+2r(k-r))$.
%\IEEEtriggeratref{3}

% Generated by IEEEtran.bst, version: 1.14 (2015/08/26)

%\bibliographystyle{IEEEtran}
\end{document}